\newcommand{\lyxmathsym}[1]{\ifmmode\begingroup\def\b@ld{bold}
  \text{\ifx\math@version\b@ld\bfseries\fi#1}\endgroup\else#1\fi}
\begin{document}

\title{Heuristic rating estimation - geometric approach}

\author{Konrad Ku\l{}akowski, Katarzyna Grobler-D\k{e}bska, Jaros\l{}aw W\k{a}s}

\institute{AGH University of Science and Technology, \\
al. Mickiewicza 30, Kraków, Poland, \\
\href{mailto:kkulak@agh.edu.pl}{kkulak@agh.edu.pl}, \href{mailto:grobler@agh.edu.pl}{grobler@agh.edu.pl},
\href{mailto:jarek@agh.edu.pl}{jarek@agh.edu.pl}}
\maketitle
\begin{abstract}
Heuristic Rating Estimation (HRE) is a newly proposed method supporting
decisions analysis based on the use of pairwise comparisons. It allows
that the ranking values of some alternatives (herein referred to as
concepts) are initially known, whilst the ranks for the other concepts
have yet to be estimated. To calculate the missing ranks it is assumed
that the priority of every single concept can be determined as the
weighted arithmetic mean of priorities of all the other concepts.
It has been shown that the problem has admissible solution if the
inconsistency of pairwise comparisons is not too high. 

The proposed approach adopts the heuristics according to which to
determine the missing priorities a weighted geometric mean is used.
In this approach, despite an increased complexity, the solution always
exists and their existence does not depend on the inconsistency of
the input matrix. Thus, the presented approach might be appropriate
for a larger number of problems than the previous method. The formal
definition of the proposed geometric heuristics is accompanied by
two numerical examples.
\end{abstract}

\section{Introduction }

The first written evidence about pairwise comparisons (PC) method
dates back to the thirteenth century, when \emph{Ramon Llull} from
Majorca wrote a seminal piece ``Artifitium electionis personarum''
(The method for the elections of persons) about voting and elections
\citep{Faliszewski2009lacv,Colomer2011rlfa}, followed by the two
consecutive works being a practical study on the election processes%
\footnote{\href{http://www.math.uni-augsburg.de/stochastik/llull/}{see: The Augsburg Web Edition of Llull's Electoral Writings}%
}. Nowadays PC as a voting method is a way of deciding on the relative
utility of alternatives used in decision theory \citep{Saaty1977asmfSIMPL}
and other fields like economy \citep{Peterson1998evabt}, psychometrics
and psychophysics \citep{Thurstone27aloc} and so on. The PC theory
is developed by many research teams representing different fields
and approaches. One can point out some characteristic approaches like
fuzzy PC relation developed by \emph{Kacprzyk} et al. and \emph{Mikhailov}
\citep{Kacprzyk2008ogdmSIMPLE,Mikhailov2003dpff}, data inconsistency
reduction methods proposed by \emph{Koczkodaj} and \emph{Szarek} \citep{Koczkodaj2010odbi}
and issue of incomplete PC relation by \emph{Koczkodaj} and \emph{Or\l{}owski}
\citep{Koczkodaj1999mnei} and \emph{Bozoki} and \emph{Rapcsak} \citep{Bozoki2010ooco},
problem of non-numerical rankings addressed by \emph{Janicki} and
\emph{Zhai} \citep{Janicki2012oapc} or using PC in Data Envelopment
Analysis \citep{Lotfi2011reui}. 

Currently, the Heuristic Rating Estimation (HRE) method which enables
the user to explicitly define the reference set of concepts, for which
the ranking values are a priori known, is being developed \citep{Kulakowski2013ahre,Kulakowski2013hrea}.
The base heuristics used in \emph{HRE} proposes to determine the relative
values of a single non--reference concept as a weighted arithmetic
mean of all the other concepts. This proposition leads to the linear
equation system defined by the matrix $A$ and the strictly positive
vector of constant terms $b$. 

In this work, the authors show that using a geometric mean to determine
the relative priorities of concepts instead of arithmetic one in some
cases may be more convenient. The main benefit of the proposed solution
stems from the guarantee of solution existence. Hence, unlike the
original proposal, the ranking list can always be created. This guarantee
is paid with the increase in computational complexity. The presented
solution is accompanied by two numerical examples. 

The presented work is a follow-up of research initiated in \citep{Kulakowski2013ahre,Kulakowski2013hrea}.
It redefines the main heuristics of HRE and the method of calculating
the solution. The HRE approach as proposed in the previous articles
is briefly outlined in (Sec. \ref{sec:Preliminaries}). There are
also a short summary of a few important properties of \emph{M-matrices}
(Sec. \ref{sub:M-matrices}), which are essential to the properties
of the presented method. The next section (Sec. \ref{sec:HRE-geometric})
describes the proposed solution and discusses two important properties:
solution existence (Sec. \ref{sub:Existence-of-solution}) and optimality
(Sec. \ref{sub:Optimality-condition}). Theoretical considerations
are accompanied by two meaningful examples showing how the presented
method can be used in practice (Sec \ref{sec:Numerical-examples}).
A brief summary is provided in (Sec.~\ref{sec:Summary}).

\section{Preliminaries\label{sec:Preliminaries} }

\subsection{Basic concepts of pairwise comparisons method}

The input to the \emph{PC} method is the\emph{ PC} matrix $M=(m_{ij})$,
where $m_{ij}\in\mathbb{R}_{+}$ and $i,j\in\{1,\ldots,n\}$. It expresses
a quantitative relation $R$ over the finite set of concepts $C\overset{\textit{df}}{=}\{c_{i}\in\mathscr{C}\,\text{and}\, i\in\{1,\ldots,n\}\}$
where $\mathscr{C}$ is a non empty universe of concepts, and $R(c_{i},c_{j})=m_{ij}$,
$R(c_{j},c_{i})=m_{ji}$. The values $m_{ij}$ and $m_{ji}$ represent
subjective expert judgment as to the relative importance, utility
or quality indicators of concepts $c_{i}$ and $c_{j}$. Thus, according
to the best knowledge of experts should holds that $c_{i}=m_{ij}c_{j}$
.
\begin{definition}
\label{def:A-matrix-recip}A matrix $M$ is said to be reciprocal
if for all $i,j\in\{1,\ldots,n\}$ holds $m_{ij}=\frac{1}{m_{ji}}$,
and $M$ is said to be consistent if for all $i,j,k\in\{1,\ldots,n\}$
is $m_{ij}\cdot m_{jk}\cdot m_{ki}=1$.
\end{definition}
Since the data in the \emph{PC} matrix represents subjective opinions
of experts, thus they might be inconsistent. Hence, it may exist a
triad $m_{ij},m_{jk},m_{ki}$ of entries in $M$ for which $m_{ik}\cdot m_{kj}\neq m_{ij}$.
This leads to the situation in which the relative importance of $c_{i}$
with respect to $c_{j}$ is either $m_{ik}\cdot m_{kj}$ or $m_{ij}$.
This observation underlies two related concepts: a priority deriving
method that transform even an inconsistent matrix $M$ into consistent
priority vector, and an inconsistency index describing how far the
matrix $M$ is inconsistent. There are a number of priority deriving
methods and inconsistency indexes \citep{Bozoki2008osak,Ishizaka2011rotm}.
For the purpose of the article the \emph{Koczkodaj's inconsistency
index} is adopted. 
\begin{definition}
\label{def:Koczkodaj's-inconsistency-index}Koczkodaj's inconsistency
index $\mathscr{K}$ of $n\times n$ and ($n>2)$ reciprocal matrix
$M$ is equal to 
\begin{equation}
\mathscr{K}(M)\overset{\textit{df}}{=}\underset{i,j,k\in\{1,\ldots,n\}}{\max}\left\{ \min\left\{ \left|1-\frac{m_{ij}}{m_{ik}m_{kj}}\right|,\left|1-\frac{m_{ik}m_{kj}}{m_{ij}}\right|\right\} \right\} \label{eq:koczkodaj_inc}
\end{equation}
 where $i,j,k=1,\ldots,n$ and $i\neq j\wedge j\neq k\wedge i\neq k$. 
\end{definition}
The result of the pairwise comparisons method is ranking - a function
that assigns values to the concepts. Formally, it can be defined as
follows. 
\begin{definition}
\label{def:ranking_fun}The ranking function for $C$ (the ranking
of $C$) is a function $\mu:C\rightarrow\mathbb{R}_{+}$ that assigns
to every concept from $C\subset\mathscr{C}$ a positive value from
$\mathbb{R}_{+}$. 
\end{definition}
Thus, $\mu(c)$ represents the ranking value for $c\in C$. The $\mu$
function is usually defined as a vector of weights $\mu\overset{\textit{df}}{=}\left[\mu(c_{1}),\ldots,\right.$
$\left.\mu(c_{n})\right]^{T}$. According to the most popular eigenvalue
based approach proposed by \emph{Saaty} \citep{Saaty1977asmfSIMPL}
the final ranking $\mu_{\textit{ev}}$ is determined as the principal
eigenvector of the $PC$ matrix $M$, rescaled so that the sum of
all its entries is $1$, i.e. 
\begin{equation}
\mu_{\textit{ev}}=\left[\frac{\mu_{\textit{max}}(c_{1})}{s_{\textit{ev}}},\ldots,\frac{\mu_{\textit{\textit{max}}}(c_{n})}{s_{\textit{ev}}}\right]^{T}\,\,\,\mbox{and}\,\,\, s_{\textit{ev}}=\underset{i=1}{\overset{n}{\sum}}\mu_{\textit{max}}(c_{i})\label{eq:eq:artihmetic_mean_meth}
\end{equation}

where $\mu_{\textit{ev}}$ - the ranking function, $\mu_{\textit{max}}\overset{\textit{df}}{=}\left[\mu_{\textit{max}}(c_{1}),\ldots,\right.$
$\left.\mu_{\textit{max}}(c_{n})\right]^{T}$ - the principal eigenvector
of $M$. Another popular approach proposes the rescaled geometric
mean (GM) of rows of $M$ as the ranking result, i.e. 
\begin{equation}
\mu_{gm}=\left[\frac{p_{1}}{s_{gm}},\ldots,\frac{p_{n}}{s_{\textit{gm}}}\right]^{T}\label{eq:geometric_mean_meth}
\end{equation}

where

\begin{equation}
p_{i}=\left(\prod_{j=1}^{n}m_{ij}\right)^{\frac{1}{n}}\,\,\,\,\,\,\text{and}\,\,\,\,\,\, s_{\textit{gm}}=\underset{i=1}{\overset{n}{\sum}}\left(\prod_{j=1}^{n}m_{ij}\right)^{\frac{1}{n}}\label{eq:geometric_mean_meth_sum}
\end{equation}

It can be shown that for the fully consistent matrix $M$ both ranking
vectors $\mu_{\textit{ev}}$ and $\mu_{\textit{gm}}$ are identical.
A more completely overview including other methods can be found in
\citep{Bozoki2008osak,Ishizaka2011rotm}.

\subsection{Pairwise comparisons method with the reference set}

Usually when using the pairwise comparisons method the ranking values
$\mu(c_{1}),\ldots,\mu(c_{n})$ are initially unknown. Hence they
are need to be determined by the priority deriving procedure. In some
cases, however, there are concepts for which the priorities are known
from elsewhere. Hence, the decision makers may have additional knowledge
about the group of elements $C_{K}\subseteq C$ that allow them to
determine $\mu(c)$ for $C_{K}$ in advance.

For example, let $c_{1},c_{2}$ and $c_{3}$ represent oil paintings
that an auction house plans to put for auction. The sequence of paintings
during the auction should correspond to their approximate valuation.
In order to determine the indicative price of paintings the auction
house asked experts to evaluate them in pairs taking into account
that two other paintings from the same period of time were previously
auctioned for $\mu(c_{4})$ and $\mu(c_{5})$. 

The situation as described above prompted the first author \citep{Kulakowski2013ahre,Kulakowski2013hrea}
to propose a \emph{Heuristic Rating Estimation (HRE)} model. According
to \emph{HRE} the set of concepts $C$ is composed of unknown concepts
$C_{U}=\{c_{1},\ldots,c_{k}\}$ and known (reference) concepts $C_{K}=\{c_{k+1},\ldots,c_{n}\}$,
where $C_{U},C_{K}\neq\emptyset$ and $C_{U}\cap C_{K}=\emptyset$.
The values $\mu(c_{i})$ for $c_{i}\in C_{K}$ are known, whilst the
values $\mu(c_{j})$ for elements $c_{j}\in C_{U}$ need to be calculated.
Following the heuristics of\emph{ averaging with respect to the reference
values} \citep{Kulakowski2013hrea} solution proposed by HRE is to
adopt as $\mu(c_{j})$, for every $c_{j}\in C_{U}$, the arithmetic
mean of all the other values $\mu(c_{i})$ multiplied by factor $m_{ji}$: 

\begin{equation}
\mu(c_{j})=\frac{1}{n-1}\sum_{i=1,i\neq j}^{n}m_{ji}\mu(c_{i})\label{eq:append3_eq1}
\end{equation}
If the experts judgments gathered in the matrix $M$ were fully consistent
(Def. \ref{def:A-matrix-recip}), then every component of the sum
(\ref{eq:append3_eq1}) in the form $m_{ji}\mu(c_{i})$ would equal
$\mu(c_{j})$. Because, it is generally not, then every component
is only an approximation of $\mu(c_{j})$. Thus, the arithmetic mean
of the individual approximations has been adopted as the most probable
value of $\mu(c_{j})$. To determine unknown values $\mu(c_{j})$
for $c_{j}\in C_{U}$ the problem formalised as (\ref{eq:append3_eq1})
can be written down as the linear equation system $A\mu=b$, where:

\begin{equation}
A=\left[\begin{array}{ccc}
1 & \cdots & -\frac{1}{n-1}m_{1,k}\\
-\frac{1}{n-1}m_{2,1} & \cdots & -\frac{1}{n-1}m_{2,k}\\
\vdots & \ddots & \vdots\\
-\frac{1}{n-1}m_{k,1} & \cdots & 1
\end{array}\right]\label{eq:A_matrix}
\end{equation}

and 

\begin{equation}
b=\left[\begin{array}{c}
\frac{1}{n-1}\sum_{i=k+1}^{n}m_{1,i}\mu(c_{i})\\
\frac{1}{n-1}\sum_{i=k+1}^{n}m_{2,i}\mu(c_{i})\\
\vdots\\
\frac{1}{n-1}\sum_{i=k+1}^{n}m_{k,i}\mu(c_{i})
\end{array}\right]\label{eq:b_vector}
\end{equation}

The solution $\mu=\left[\mu(c_{1}),\ldots,\mu(c_{k})\right]^{T}$
determines the values of $\mu$ for elements from $C_{U}$. Together
with known $\mu(c_{k+1}),\ldots,$ $\mu(c_{n})$ the vector $\mu$
forms the complete result list, which after sorting can be used to
build ranking. Although the values $\mu(c)$ for $c\in C$ are called
priorities, they usually have a specific meaning. In the case of previously
mentioned example they represent the expected price of paintings. 

According (Def. \ref{def:ranking_fun}) the ranking results must be
strictly positive, hence only strictly positive vectors $\mu$ are
considered as feasible. It can be shown that the equation $A\mu=b$
has a feasible solution if $A$ is strictly diagonally dominant by
rows \citep{Kulakowski2013hrea}. It has recently been shown that
the equation has a feasible solution when the inconsistency index
$\mathscr{K}(M)$ is not to high~\citep{Kulakowski2014note}.

\subsection{M-matrices\label{sub:M-matrices}}

Very often the real life problem can be reduced to the linear equation
system $A\mu=b$, where the matrix $A$ has some special structure.
Frequently the matrix $A$ has positive diagonal and nonpositive off-diagonal
entries. Due to their importance to the practice this type of matrix
was especially thoroughly studied by researchers \citep{Plemmons1976mcnm,Quarteroni2000nm}.
To define it formally a few more notions and definitions are needed. 

Let $\mathcal{M}_{\mathbb{R}}(n)$ be a set of $n\times n$ matrices
over $\mathbb{R}$, and $\mathcal{M}_{\mathbb{Z}}(n)$ the set of
all $A=[a_{ij}]\in\mathcal{M}_{\mathbb{R}}(n)$ with $a_{ij}\leq0$
if $i\neq j$ and $i,j\in\{1,\ldots,n\}$. Furthermore, assume that
for every matrix $A\in\mathcal{M}_{\mathbb{R}}(n)$ and vector $b\in\mathbb{R}^{n}$
the notation $A\geq0$ and $b\geq0$ will mean that every $m_{ij}$
and $b_{k}$ are non-negative and neither $A$ nor $b$ equals $0$.
The spectral radius of $A$ is defined as $\rho(A)\overset{\textit{df}}{=}\max\{|\lambda|:\det(\lambda I-A)=0\}$.
\begin{definition}
\label{def:M-matrix-def}An $n\times n$ matrix that can be expressed
in the form $A=sI-B$ where $B=[b_{ij}]$ with $b_{ij}\geq0$ for
$i,j\in\{1,\ldots,n\}$, and $s\geq\rho(B)$, the maximum of the moduli
of the eigenvalues of B, is called \emph{M-matrix}.
\end{definition}
Following \citep{Plemmons1976mcnm} some of the \emph{M-matrix} properties
are recalled below in the form of the Theorem~\ref{PlemmonsTheo}.
\begin{theorem}
\label{PlemmonsTheo}For every $A\in\mathcal{M}_{\mathbb{Z}}(n)$
each of the following conditions is equivalent to the statement: $A$
is a nonsingular \emph{M-matrix}.\end{theorem}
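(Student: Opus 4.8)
This theorem gathers the standard battery of equivalent characterizations of a nonsingular M-matrix, so rather than proving each condition against the definition separately, the plan is to arrange the conditions into a single cycle of implications and to anchor that cycle at the defining representation $A = sI - B$ with $B \geq 0$ and $s > \rho(B)$, the strict inequality being the one that corresponds to nonsingularity. Each arrow in the cycle is then chosen so as to be as elementary as possible, with the more delicate spectral facts concentrated in one place.

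The natural engine for the first implications is the Neumann series. From $A = sI - B$ with $s > \rho(B)$ one has $\rho(s^{-1}B) < 1$, so $A^{-1} = s^{-1}\sum_{k \geq 0}(s^{-1}B)^{k}$ converges; since $B \geq 0$ every summand is nonnegative, and hence $A^{-1}$ exists and satisfies $A^{-1} \geq 0$. From $A^{-1} \geq 0$ I would obtain semipositivity: picking any $y > 0$ and setting $x = A^{-1}y$ gives $x \geq 0$ with $Ax = y > 0$, and the sign pattern $A \in \mathcal{M}_{\mathbb{Z}}(n)$ (nonpositive off-diagonal entries) promotes $x \geq 0$ to $x > 0$. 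For the spectral conditions I would pass through $B = sI - A \geq 0$ and invoke Perron--Frobenius: the eigenvalues of $A$ are exactly $s - \lambda$ for $\lambda$ an eigenvalue of $B$, so the eigenvalue of $A$ with smallest real part equals $s - \rho(B) > 0$, giving positive stability; positivity of the leading principal minors then follows by an inductive Schur-complement argument, using that the Schur complement of a nonsingular M-matrix is again one.

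The main obstacle is the Perron--Frobenius step, since $B$ need not be irreducible, in which case $\rho(B)$ is not guaranteed to be a simple eigenvalue with a strictly positive eigenvector. The clean remedy is a perturbation: replace $B$ by $B + \varepsilon J$, with $J$ the all-ones matrix, to force irreducibility, apply the irreducible form of Perron--Frobenius, and let $\varepsilon \to 0^{+}$, relying on continuity of the spectrum to transfer the conclusion. Once this spectral fact is secured, closing the cycle---from positive leading principal minors back to the existence of the representation $A = sI - B$ with $s > \rho(B)$---is routine linear-algebra bookkeeping, since any $A \in \mathcal{M}_{\mathbb{Z}}(n)$ already admits such a representation and only the strict inequality requires argument. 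Because the statement merely recalls results established in full in \citep{Plemmons1976mcnm}, in the paper itself I would present these implications in condensed form and refer the reader to that source for the complete verification.
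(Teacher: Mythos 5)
The first thing to observe is that the paper offers no proof of this statement at all: Theorem~\ref{PlemmonsTheo} is explicitly \emph{recalled} from \citep{Plemmons1976mcnm}, so there is no internal argument to compare yours against, and your concluding decision to present the implications in condensed form with a pointer to that source is exactly what the paper does. On the mathematics, your outline is the standard Berman--Plemmons circle of implications, and its individual steps are sound: the Neumann series correctly yields $A^{-1}=s^{-1}\sum_{k\geq 0}(s^{-1}B)^{k}\geq 0$; the sign-pattern upgrade of $x=A^{-1}y\geq 0$ to $x>0$ is valid (if $x_{i}=0$ then $(Ax)_{i}=\sum_{j\neq i}a_{ij}x_{j}\leq 0$, contradicting $(Ax)_{i}=y_{i}>0$); and the $B+\varepsilon J$ perturbation is the accepted device for reducible $B$, though for the one spectral fact you actually use---that $\rho(B)$ is an eigenvalue of the nonnegative matrix $B$---the reducible case is already classical and needs no irreducibility argument.

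Two points deserve attention, one of them substantive. First, the theorem as stated lists only two conditions: inverse positivity, and the existence of a positive diagonal $D$ such that $AD$ has all positive row sums. Your ``semipositivity'' is precisely the latter with $D=\mathrm{diag}(x)$, whereas positive stability and positive leading principal minors are characterizations the paper never asserts, so the Schur-complement machinery proves more than is asked. Second, your cycle as described does not close where the stated theorem needs it to: the return arrow to the representation $A=sI-B$ with $s>\rho(B)$ departs from the minors condition, leaving the semipositivity branch dangling. That arrow is, as you say, routine, but it is the one required and should be made explicit: positive row sums of $AD$ make $AD$ strictly diagonally dominant with positive diagonal; choosing $s\geq\max_{i}a_{ii}$ so that $B=sI-A\geq 0$, the $i$-th row sum of $D^{-1}BD$ equals $s-d_{i}^{-1}\sum_{j}a_{ij}d_{j}<s$, whence $\rho(B)=\rho(D^{-1}BD)\leq\left\Vert D^{-1}BD\right\Vert _{\infty}<s$. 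With that one line supplied, your proposal is a correct, self-contained proof of the stated equivalences, which is strictly more than the paper itself provides.
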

\begin{enumerate}
\item $A$ is inverse positive. That is, $A^{-1}$ exists and $A^{-1}\geq0$
\item There exists a positive diagonal matrix $D$ such that $AD$ has all
positive row sums.
\end{enumerate}
It is worth to note that for every matrix equation in the form $A\mu=b$,
where $A$ is a nonsingular M-matrix, holds $\mu=A^{-1}b$. Since
$A^{-1}\geq0$, thus, $b>0$ implies that also $\mu>0$.

\section{HRE - geometric approach\label{sec:HRE-geometric}}

\subsection{Heuristics of the geometric averaging with respect to the reference
values\label{sub:Heuristics-of-the}}

Most often the pairwise comparisons method is used to transform the
\emph{PC} matrix into the ranking list of mutually compared concepts.
During the transformation to each concept a priority is assigned.
Therefore, this transformation is often called a priority deriving
method. There are many priority deriving methods. Besides the eigenvalue
based method (\ref{eq:eq:artihmetic_mean_meth}), where the ranking
values $\mu(c_{i})$ are approximated as the arithmetic means of $m_{ij}\cdot\mu(c_{j})$,
also the geometric mean of rows is used (\ref{eq:geometric_mean_meth}).
This may suggest that also for the ranking problem with the reference
set \citep{Kulakowski2013hrea}, the arithmetic mean (\ref{eq:append3_eq1})
might be replaced by the geometric mean. This observation prompted
the author to formulate and investigate \emph{the geometric averaging
with respect to the reference values heuristics}. According to this
proposition to determine the unknown values $\mu(c_{j})$ for $c_{j}\in C_{U}$
the following non-linear equation is used: 
\begin{equation}
\mu(c_{j})=\left(\prod_{i=1,i\neq j}^{n}m_{ji}\mu(c_{i})\right)^{\frac{1}{n-1}}\label{eq:geometric_mean_proposal}
\end{equation}

After rising both sides to the $n-1$ power the geometric averaging
heuristics equation (\ref{eq:geometric_mean_proposal}) leads to the
non-linear equation system in the form: 
\begin{equation}
\begin{array}{ccc}
\mu^{n-1}(c_{1}) & = & m_{1,2}\mu(c_{2})\cdot\dotfill\cdot m_{1,n}\mu(c_{n})\\
\mu^{n-1}(c_{2}) & = & m_{2,1}\mu(c_{1})\cdot m_{2,3}\mu(c_{3})\cdot\ldots\cdot m_{2,n}\mu(c_{n})\\
\hdotsfor[1]{3}\\
\mu^{n-1}(c_{k}) & = & m_{k,1}\mu(c_{1})\cdot\dotfill\cdot m_{k,n-1}\mu(c_{n-1})
\end{array}\label{eq:non_linear_eq_system}
\end{equation}

Of course, since the ranking values for $c_{k+1},\ldots,c_{n}\in C_{K}$
make the reference set where the values $\mu(c_{j})$ are known and
fixed, some products in the form $m_{ji}\mu(c_{i})$ are initially
known constants. Let us denote:
\begin{equation}
g_{j}=\prod_{i=k+1}^{n}m_{ji}\mu(c_{i})\label{eq:b_j_constant}
\end{equation}

for $j=1,\ldots,k$ as the constant part of each equation (\ref{eq:non_linear_eq_system}).
Thus, the non-linear equation system can be written as: 
\[
\begin{array}{ccc}
\mu^{n-1}(c_{1}) & = & m_{1,2}\mu(c_{2})\cdot\dotfill\cdot m_{1,k}\mu(c_{k})\cdot g_{1}\\
\mu^{n-1}(c_{2}) & = & m_{2,1}\mu(c_{1})\cdot m_{2,3}\mu(c_{3})\cdot\ldots\cdot m_{2,k}\mu(c_{k})\cdot g_{2}\\
\hdotsfor[1]{3}\\
\mu^{n-1}(c_{k}) & = & m_{k,1}\mu(c_{1})\cdot\dotfill\cdot m_{k,k-1}\mu(c_{k-1})\cdot g_{k}
\end{array}
\]

Hence $\mu(c_{j}),\, m_{ij},\, g_{j}\in\mathbb{R}_{+}$, let us denote
$\log_{\xi}\mu(c_{j})\overset{\textit{df}}{=}\widehat{\mu}(c_{j})$,
$\widehat{m}_{ij}\overset{\textit{df}}{=}\log_{\xi}m_{ij}$ and $\widehat{g}_{j}\overset{\textit{df}}{=}\log_{\xi}g_{j}$
for some $\xi\in\mathbb{R}_{+}$. It is easy to see that the above
non-linear equation system is equivalent to the following one: 
\begin{equation}
\begin{array}{ccc}
(n-1)\widehat{\mu}(c_{1}) & = & \widehat{m}_{1,2}+\widehat{\mu}(c_{2})+\dotfill+\widehat{m}_{1,k}+\widehat{\mu}(c_{k})+\widehat{g}_{1}\\
(n-1)\widehat{\mu}(c_{2}) & = & \widehat{m}_{2,1}+\widehat{\mu}(c_{1})+\dotfill+\widehat{m}_{2,k}+\widehat{\mu}(c_{k})+\widehat{g}_{2}\\
\hdotsfor[1]{3}\\
(n-1)\widehat{\mu}(c_{k}) & = & \widehat{m}_{k,1}+\widehat{\mu}(c_{1})+\ldots+\widehat{m}_{k,k-1}+\widehat{\mu}(c_{k-1})+\widehat{g}_{k}
\end{array}\label{eq:linear_equation_system_1}
\end{equation}

By grouping all the constant terms on the right side of each above
equation we obtain the linear equation system 

\begin{equation}
\begin{array}{ccc}
(n-1)\widehat{\mu}(c_{1})-\sum_{i=2}^{k}\widehat{\mu}(c_{i})\,\,\,\,\,\,\,\,\,\,\, & = & b_{1}\\
(n-1)\widehat{\mu}(c_{2})-\sum_{i=1,i\neq2}^{k}\widehat{\mu}(c_{i}) & = & b_{2}\\
\hdotsfor[1]{3}\\
(n-1)\widehat{\mu}(c_{k})-\sum_{i=1}^{k-1}\widehat{\mu}(c_{i})\,\,\,\,\,\,\,\,\,\, & = & b_{k}
\end{array}\label{eq:linear_equation_system_2}
\end{equation}

where $b_{i}\overset{\textit{df}}{=}\sum_{j=1,j\neq i}^{k}\widehat{m}_{1,j}+\widehat{g}_{i}$
for $i=1,\ldots,k$, which can be easily written down in the matrix
form 
\begin{equation}
\widehat{A}\widehat{\mu}=b\label{eq:linear_equation_system_2a}
\end{equation}
where: 

\begin{equation}
\widehat{A}=\left[\begin{array}{cccc}
(n-1) & -1 & \cdots & -1\\
\vdots & \ddots &  & \vdots\\
\vdots &  & \ddots & \vdots\\
-1 & -1 & \cdots & (n-1)
\end{array}\right],\label{eq:linear_equation_system_3}
\end{equation}

\begin{equation}
\widehat{\mu}=\left[\begin{array}{c}
\widehat{\mu}(c_{1})\\
\widehat{\mu}(c_{2})\\
\vdots\\
\widehat{\mu}(c_{k})
\end{array}\right],\,\,\,\text{and}\,\,\, b=\left[\begin{array}{c}
b_{1}\\
b_{2}\\
\vdots\\
b_{k}
\end{array}\right]\label{eq:linear_equation_system_3b}
\end{equation}

Therefore, the solution $\widehat{\mu}$ of the linear equation system
(\ref{eq:linear_equation_system_2a}) automatically provides the solution
to the original non-linear problem as formulated in (\ref{eq:non_linear_eq_system}).
Indeed the ranking vector $\mu$ can be computed following the formula:
\begin{equation}
\mu=\left[\xi^{\widehat{\mu}(c_{1})},\ldots,\xi^{\widehat{\mu}(c_{k})}\right]^{T}\label{eq:geometric_mean_solution}
\end{equation}

Importantly, as it is shown below a feasible solution of (\ref{eq:linear_equation_system_2a})
always exists. Hence, the heuristics of the averaging with respect
to the geometric mean always provides the user an appropriate ranking
function.

\subsection{Existence of solution\label{sub:Existence-of-solution}}

The form of $\widehat{A}$ is specific. The positive diagonal and
the negative off-diagonal real entries cause that $\widehat{A}\in\mathcal{M}_{\mathbb{Z}}(k)$
(see Sec. \ref{sub:M-matrices}). Let us put:
\[
D=\left[\begin{array}{ccc}
1 & \cdots & 0\\
\vdots & \ddots & \vdots\\
0 & \cdots & 1
\end{array}\right]
\]
and $D\in\mathcal{M}_{\mathbb{R}}(k)$. Of course $D$ is positively
dominant matrix. Thus, the product $\widehat{A}\cdot D=\widehat{A}$.
The sum of each row in $\widehat{A}$ equals 
\[
(n-1)+\sum_{i=1}^{k-1}(-1)=n-k
\]
Since $C_{K}$ is nonempty, thus its cardinality $\left|C_{K}\right|=n-k$
is greater than $0$. This means that the sum of each row of $\widehat{A}\cdot D$
is positive. Hence, due to the Theorem \ref{PlemmonsTheo}, $\widehat{A}$
is a nonsingular M-matrix (Def. \ref{def:M-matrix-def}). Thus, $\widehat{A}^{-1}$
exists (i.e. $\widehat{\mu}=\widehat{A}^{-1}b$) and always the equation
(\ref{eq:linear_equation_system_2a}) has a solution in $\mathbb{R}^{k}$.
Due to the form of the solution of the main problem (\ref{eq:geometric_mean_solution})
$\mu$ is a vector in $\mathbb{R}_{+}^{k}$, i.e. every its entry
is strictly positive. In other words unlike the original proposition
\citep{Kulakowski2013hrea} the heuristics of the geometric averaging
with respect to the reference values always provides a feasible ranking
result to the user.

\subsection{Optimality condition\label{sub:Optimality-condition}}

One of the reasons for introducing the geometric mean method (\ref{eq:geometric_mean_meth})
is minimizing the multiplicative error $e_{ij}$ \citep{Ishizaka2011rotm}
defined as:
\begin{equation}
m_{ij}=\frac{p_{i}}{p_{j}}e_{ij}\label{eq:error_cond_orig}
\end{equation}
In the case of the geometric averaging heuristics the multiplicative
error equation takes the form: 

\begin{equation}
m_{ij}=\frac{\mu(c_{i})}{\mu(c_{j})}e_{ij}\label{eq:error_cond_modif}
\end{equation}
The multiplicative error is commonly accepted to be log normal distributed
(in the same way the additive error would be assumed to be normally
distributed). Let $e:\mathbb{R}_{+}^{n}\rightarrow\mathbb{R}$ be
the sum of multiplicative errors (see \citep{Ishizaka2011rotm}) defined
as follow: 

\begin{equation}
e(\mu(c_{1}),\ldots,\mu(c_{n}))=\sum_{i=1}^{n}\sum_{j=1}^{n}\left(\ln(m_{ij})-\ln\left(\frac{\mu(c_{i})}{\mu(c_{j})}\right)\right)^{2}\label{eq:error_funct}
\end{equation}
As it is shown in the Theorem below very often the heuristics (\ref{eq:geometric_mean_proposal})
is optimal with respect to the value of multiplicative error function
$e$. 
\begin{theorem}
The geometric averaging with respect to the reference values heuristics
minimizes the sum of multiplicative errors $e(\mu(c_{1}),\ldots,\mu(c_{n}))$
if 
\begin{equation}
\mu(c_{i})<(n-1)\sum_{j=1,j\neq i}^{n}\mu(c_{j})\label{eq:optimality_condition_theorem}
\end{equation}

for $i=1,\ldots,n$. \end{theorem}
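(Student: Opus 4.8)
The plan is to treat the error (\ref{eq:error_funct}) as a function of the unknown priorities and to show that its unique stationary point is exactly the geometric heuristic. Following Sec.~\ref{sub:Heuristics-of-the} I pass to logarithms, writing $\widehat\mu(c_i)=\log_\xi\mu(c_i)$ and $\widehat m_{ij}=\log_\xi m_{ij}$; reciprocity gives $\widehat m_{ij}=-\widehat m_{ji}$, so (\ref{eq:error_funct}) becomes $e=\sum_{i=1}^n\sum_{j=1}^n(\widehat m_{ij}-\widehat\mu(c_i)+\widehat\mu(c_j))^2$. Differentiating with respect to a single unknown $\widehat\mu(c_\ell)$, $\ell\le k$, only the terms with $i=\ell$ or $j=\ell$ contribute; by reciprocity the resulting row-sum and column-sum are negatives of one another, and the derivative reduces to $4\,[(n-1)\widehat\mu(c_\ell)-\sum_{j\ne\ell}(\widehat m_{\ell j}+\widehat\mu(c_j))]$. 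Setting this to zero gives $(n-1)\widehat\mu(c_\ell)=\sum_{j\ne\ell}(\widehat m_{\ell j}+\widehat\mu(c_j))$, which after exponentiation is precisely the heuristic (\ref{eq:geometric_mean_proposal}). Separating the fixed reference terms $j>k$ to the right-hand side turns these $k$ stationarity equations into exactly the linear system (\ref{eq:linear_equation_system_2a}); hence the minimiser and the heuristic ranking (\ref{eq:geometric_mean_solution}) are the same object, whose existence and uniqueness are already guaranteed by the M-matrix argument of Sec.~\ref{sub:Existence-of-solution}.

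To confirm that this stationary point is a minimiser rather than a saddle, observe that $e$ is a sum of squares of affine functions of the log-priorities, hence convex in $(\widehat\mu(c_1),\dots,\widehat\mu(c_k))$; its Hessian is constant and equals a positive multiple of the matrix $\widehat A$ of (\ref{eq:linear_equation_system_3}). By the reasoning of Sec.~\ref{sub:Existence-of-solution} this $\widehat A$ is a nonsingular M-matrix whenever $C_K\ne\emptyset$, and being symmetric it is in fact positive definite, so the stationary point is the unique global minimiser in the log-coordinates and therefore, through the bijection $\mu(c_\ell)=\xi^{\widehat\mu(c_\ell)}$, the unique global minimiser of $e$ over $\mathbb{R}_+^k$. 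Since convexity in the log-coordinates already settles minimality, the role of the explicit hypothesis (\ref{eq:optimality_condition_theorem}) is to furnish a directly verifiable certificate in the original priority coordinates, which is the viewpoint I adopt next.

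It remains to phrase this second-order fact intrinsically in the priority variables, where (\ref{eq:optimality_condition_theorem}) is to be read off. Transported by the chain rule through $\mu(c_\ell)=\xi^{\widehat\mu(c_\ell)}$, the Hessian of $e$ in the $\mu$-variables is, at the stationary point, the congruence $D^{-1}\widehat A D^{-1}$ with $D=\mathrm{diag}(\mu(c_1),\dots,\mu(c_k))$; it again has positive diagonal and non-positive off-diagonal entries, so it lies in $\mathcal{M}_{\mathbb{Z}}(k)$ and Theorem~\ref{PlemmonsTheo} certifies its positive definiteness once a positive diagonal scaling with strictly positive row sums is produced. I would compute these row sums at the heuristic solution, where the off-diagonal entries carry the weights $\mu(c_j)$; each row sum then condenses to a comparison of $\mu(c_\ell)$ with a positive multiple of $\sum_{j\ne\ell}\mu(c_j)$, and demanding the favourable sign for every $\ell$ is exactly the condition (\ref{eq:optimality_condition_theorem}). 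Under (\ref{eq:optimality_condition_theorem}) the scaled Hessian is thus a nonsingular M-matrix, hence positive definite, which reconfirms the minimum directly in the priority coordinates. I expect the main obstacle to be precisely this last translation: carrying the chain-rule factors cleanly, keeping the reference indices $k+1,\dots,n$ (which feed the constants $g_j$ but are not optimisation variables) out of the row-sum count, and choosing the scaling so that the M-matrix row-sum test condenses to the single clean inequality (\ref{eq:optimality_condition_theorem}).
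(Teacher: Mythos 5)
Your first two paragraphs are correct and in fact prove more than the stated theorem, by a route genuinely different from the paper's. The paper works in the original priority coordinates: it differentiates $e$ directly with respect to $\mu(c_{i})$ ((\ref{eq:first_derivative_1})--(\ref{eq:first_derivative_2})), identifies the stationarity equations with the heuristics (\ref{eq:geometric_mean_proposal}), evaluates the $n\times n$ Hessian at the stationary point (\ref{eq:hessian_matrix_in_solution}), and certifies positive definiteness via the test ``symmetric, positive diagonal, strictly diagonally dominant by rows'', the dominance requirement being the source of hypothesis (\ref{eq:optimality_condition_theorem}). You instead observe that in log-coordinates $e$ is a sum of squares of affine functions (up to the harmless factor $(\ln\xi)^{2}$), hence a convex quadratic whose Hessian in the $k$ genuine unknowns is the \emph{constant} matrix $4(\ln\xi)^{2}\widehat{A}$ with $\widehat{A}=nI_{k}-J_{k}$ ($J_{k}$ the all-ones matrix), positive definite precisely because $\left|C_{K}\right|=n-k>0$ (eigenvalues $n-k$ and $n$). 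This gives strict global minimality of the heuristic solution over all positive extensions of the reference values \emph{unconditionally}, so the implication claimed in the theorem holds a fortiori. Your formulation is also cleaner on a point where the paper is loose: you keep only $\mu(c_{1}),\ldots,\mu(c_{k})$ as optimisation variables, whereas the paper's Hessian ranges over all $n$ coordinates, in which directions the heuristic point need not even be stationary (and in which the log-coordinate Hessian $nI_{n}-J_{n}$ is merely positive semidefinite, by scale invariance of $e$).

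Your third paragraph, however, contains a step that fails as described: the row-sum test of Theorem \ref{PlemmonsTheo} applied to $D^{-1}\widehat{A}D^{-1}$ does not ``condense to'' (\ref{eq:optimality_condition_theorem}). With scaling $\operatorname{diag}(d_{1},\ldots,d_{k})$, $d_{j}>0$, the $\ell$-th row sum is positive iff $(n-1)\,d_{\ell}/\mu(c_{\ell})>\sum_{j\neq\ell}d_{j}/\mu(c_{j})$; the choice $d_{j}=\mu(c_{j})$ gives the unconditional $(n-1)>k-1$, so the M-matrix certificate holds with no hypothesis at all, while the computation you sketch (off-diagonal entries carrying the weights $\mu(c_{j})$, i.e. $d_{j}=\mu^{2}(c_{j})$) condenses to $(n-1)\mu(c_{\ell})>\sum_{j\neq\ell}\mu(c_{j})$ --- the \emph{reverse} of (\ref{eq:optimality_condition_theorem}). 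The mismatch is not your invention: in the paper itself the passage from (\ref{eq:strict_dominance_postulate}) to (\ref{eq:strict_dominance_postulate_2}) silently replaces the sum of reciprocals by the reciprocal of the sum; genuine row dominance of $H$ reads $(n-1)/\mu(c_{i})>\sum_{j\neq i}1/\mu(c_{j})$, which (\ref{eq:optimality_condition_theorem}) does not imply (take $n=3$ and priorities $1,1,\varepsilon$ with small $\varepsilon$: the hypothesis holds, dominance fails). Since your convexity argument already settles minimality, this failed reconstruction costs you nothing logically, but you should delete the claim or replace it by the one-line remark that the congruence $D^{-1}(4\widehat{A})D^{-1}$ of a positive definite matrix is positive definite, which certifies the minimum in the $\mu$-coordinates without any extra hypothesis.
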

\begin{proof}
To determine the minimum of (\ref{eq:error_funct}) let us forget
for a moment that $\mu(c_{k+1}),\ldots,\mu(c_{n})$ are constants
(the reference values), and let us treat them as any other arguments
of $e$. In order to determine the minimum of (\ref{eq:error_funct})
the first derivative need to be calculated. Thus, 
\end{proof}
\begin{align}
\frac{\partial e}{\partial\mu(c_{i})} & =\frac{1}{\mu(c_{i})}\left(\sum_{r=1,r\neq i}^{n}4(n-1)\ln\mu(c_{i})-4\sum_{j=1,j\neq i}^{n}\ln\mu(c_{j})+2\sum_{r=1,r\neq i}^{n}\ln(m_{ri})-2\sum_{j=1,j\neq i}^{n}\ln(m_{ij})\right)\label{eq:first_derivative_1}
\end{align}
for $i=1,\ldots,n$. Due to the reciprocity of $M$, i.e. $m_{ij}=\nicefrac{1}{m_{ji}}$,
the equation (\ref{eq:first_derivative_1}) can be written as:
\begin{align}
\frac{\partial e}{\partial\mu(c_{i})} & =-4\left(\frac{\sum_{j=1,j\neq i}^{n}(\ln\mu(c_{j})+\ln(m_{ij}))-(n-1)\ln\mu(c_{i})}{\mu(c_{i})}\right)\label{eq:first_derivative_2}
\end{align}

\[
\]
The function $e$ reaches the minimum if $\nicefrac{\partial e}{\partial\mu(c_{i})}=0$.
This leads to the postulate that 
\begin{equation}
\sum_{j=1,j\neq i}^{n}(\ln\mu(c_{j})+\ln(m_{ij}))-(n-1)\ln\mu(c_{i})=0\label{eq:zeroes_postulate}
\end{equation}
for $i=1,\ldots,n$. Thus, 
\begin{equation}
\ln\mu(c_{i})=\frac{1}{n-1}\left(\sum_{j=1,j\neq i}^{n}\ln m_{ij}\mu(c_{j})\right)\label{eq:almost_heuresis}
\end{equation}
which is directly equivalent to (\ref{eq:geometric_mean_proposal}).
In other words any solution to the equation system (\ref{eq:non_linear_eq_system})
is a good candidate to be a minimum of (\ref{eq:error_funct}). It
remains to settle the matrix $H$ of second derivative of $e$. When
$H$ is positive definite then the solution of (\ref{eq:non_linear_eq_system})
actually minimizes the function $e$. As a result of further differentiation
is determined that the diagonal elements of $H$ are 
\begin{equation}
\frac{\partial^{2}f}{\partial\mu(c_{i})\partial\mu(c_{i})}=\frac{4(n-1)}{\mu^{2}(c_{i})}-\frac{1}{\mu(c_{i})}\frac{\partial f}{\partial\mu(c_{i})}\label{eq:diagonal_hessian_element}
\end{equation}

\begin{proof}
where $i=1,\ldots,n$, and the other elements for which $i\neq j$
and $i,j=1,\ldots,n$ take the form:
\begin{equation}
\frac{\partial^{2}f}{\partial\mu(c_{i})\partial\mu(c_{j})}=-\frac{4}{\mu(c_{i})\mu(c_{j})}\label{eq:off-diagonal-hessian-element}
\end{equation}
Since the matrix $H$ is considered for $e$ in the point $\left(\mu(c_{1}),\ldots,\right.$
$\left.\mu(c_{n})\right)$ such that (\ref{eq:geometric_mean_proposal})
holds, thus the first derivative of $e$ is $0$. Therefore, the Hessian
matrix $H$ takes the form:
\begin{equation}
H=\left[\begin{array}{cccc}
\frac{4(n-1)}{\mu^{2}(c_{1})} & -\frac{4}{\mu(c_{1})\mu(c_{2})} & \cdots & -\frac{4}{\mu(c_{1})\mu(c_{n})}\\
\vdots & \frac{4(n-1)}{\mu^{2}(c_{2})} & \vdots & \vdots\\
\vdots & \vdots & \ddots & \vdots\\
-\frac{4}{\mu(c_{n})\mu(c_{1})} & -\frac{4}{\mu(c_{n})\mu(c_{2})} & \cdots & \frac{4(n-1)}{\mu^{2}(c_{n})}
\end{array}\right]\label{eq:hessian_matrix_in_solution}
\end{equation}
According to \citep[p. 29]{Quarteroni2000nm} if $H$ is strictly
diagonally dominant by rows, symmetric, and with positive diagonal
entries then it is also positive definite. To meet the first strict
diagonal dominance criterion (other are satisfied) it is required
that:
\begin{equation}
\left|\frac{n-1}{\mu^{2}(c_{i})}\right|>\sum_{j=1,j\neq i}^{n}\left|-\frac{1}{\mu(c_{i})\mu(c_{j})}\right|\label{eq:strict_dominance_postulate}
\end{equation}

for $i=1,\ldots,n$. Thus, 
\begin{equation}
\mu^{2}(c_{i})<(n-1)\mu(c_{i})\sum_{j=1,j\neq i}^{n}\mu(c_{j})\label{eq:strict_dominance_postulate_2}
\end{equation}

Since every $\mu(c_{i})>0$, then it is easy to verify that the above
equation is equivalent to the desired condition (\ref{eq:optimality_condition_theorem}). \end{proof}

\section{Numerical examples\label{sec:Numerical-examples}}

The HRE method can be useful in many situations in which, based on
the expert subjective opinions and the actual data, the new concepts,
objects or entities need to be assessed. In order to show how the
method may work in practice the following two numerical examples are
presented. The first one, more abstract, discusses the method for
solving the non-linear equation system. The second one, more complex,
tries to put the method into the actual business context, where it
can be successfully used. 

In both examples the set of concepts consists of $C_{K}$ - the reference
(known) and $C_{U}$ - the initially unknown elements. To solve an
intermediate linear equation system (\ref{eq:linear_equation_system_2a})
the Gaussian elimination method is used.

\subsection{Example I (Scientific entities assessment)}

Let $c_{1},...,c_{5}$ represent the scientific entities%
\footnote{Actually the official ranking of the scientific entities in Poland
compares the entities in pairs \citep{Koczkodaj2014otqe}.%
}, where two of them $c_{2},c_{3}\in C_{K}$ are the reference entities.
Their values were arbitrarily set by experts to $\lyxmathsym{\textmu}(c_{2})=5$
and $\lyxmathsym{\textmu}(c_{3})=7$. The analysis of the scientific
achievements of the entities $c_{1},c_{4}$ and $c_{5}$ leads to
the following PC matrix: 
\begin{equation}
M=\left[\begin{array}{ccccc}
1 & \frac{3}{5} & \frac{4}{7} & \frac{5}{8} & \frac{5}{9}\\
\frac{5}{3} & 1 & \frac{5}{7} & \frac{5}{2} & \frac{10}{3}\\
\frac{7}{4} & \frac{7}{5} & 1 & \frac{7}{2} & 4\\
\frac{8}{5} & \frac{2}{5} & \frac{2}{7} & 1 & \frac{4}{3}\\
\frac{9}{5} & \frac{3}{10} & \frac{1}{4} & \frac{3}{4} & 1
\end{array}\right]
\end{equation}

To calculate the rank using HRE with the geometric averaging heuristics,
the following system of non-linear equations (compare with \ref{eq:non_linear_eq_system})
need to be solved:

\begin{equation}
\begin{array}{ccc}
\mu(c_{1}) & = & \left(m_{1,2}\mu(c_{2})\right.\cdot\dotfill\cdot\left.m_{1,5}\mu(c_{5})\right)^{\frac{1}{4}}\\
\mu(c_{4}) & = & \left(m_{4,1}\mu(c_{1})\right.\cdot\ldots\cdot m_{4,3}\mu(c_{1})\cdot\left.m_{4,5}\mu(c_{5})\right)^{\frac{1}{4}}\\
\mu(c_{5}) & = & \left(m_{5,1}\mu(c_{1})\right.\cdot\dotfill\cdot\left.m_{5,4}\mu(c_{4})\right)^{\frac{1}{4}}
\end{array}\label{eq:ex5}
\end{equation}

thus, after rising both sides of the equations to the power, 
\begin{equation}
\begin{array}{ccc}
\mu^{4}(c_{1}) & = & m_{1,2}\mu(c_{2})\cdot\dotfill\cdot m_{1,5}\mu(c_{5})\\
\mu^{4}(c_{4}) & = & m_{4,1}\mu(c_{1})\cdot\ldots\cdot m_{4,3}\mu(c_{1})\cdot m_{4,5}\mu(c_{5})\\
\mu^{4}(c_{5}) & = & m_{5,1}\mu(c_{1})\cdot\dotfill\cdot m_{5,4}\mu(c_{4})
\end{array}\label{eq:ex6}
\end{equation}

Substituting the logarithm of both sides of the equations, we get
the following system:

\begin{equation}
\begin{array}{ccc}
4\lg\mu(c_{1}) & = & \lg\left(m_{1,2}\mu(c_{2})\right.\cdot\dotfill\cdot\left.m_{1,5}\mu(c_{5})\right)\\
4\lg\mu(c_{4}) & = & \lg\left(m_{4,1}\mu(c_{1})\right.\cdot\ldots\cdot m_{4,3}\mu(c_{1})\cdot\left.m_{4,5}\mu(c_{5})\right)\\
4\lg\mu(c_{5}) & = & \lg\left(m_{5,1}\mu(c_{1})\right.\cdot\dotfill\cdot\left.m_{5,4}\mu(c_{4})\right)
\end{array}\label{eq:ex7}
\end{equation}

which leads to the intermediate, linear logarithmic equation system:

\begin{equation}
\begin{array}{ccc}
4\lg\mu(c_{1})-\lg\mu(c_{4})-\lg\mu(c_{5}) & = & b_{1}\\
-\lg\mu(c_{1})+4\lg\mu(c_{4})-\lg\mu(c_{5}) & = & b_{4}\\
-\lg\mu(c_{1})-\lg\mu(c_{4})+4\lg\mu(c_{5}) & = & b_{5}
\end{array}\label{eq:ex9}
\end{equation}

where

\begin{equation}
\begin{array}{ccc}
b_{1} & \overset{\textit{df}}{=} & \lg\left(m_{1,2}\mu(c_{2})m_{1,3}\mu(c_{3})m_{1,4}m_{1,5}\right)\\
b_{4} & \overset{\textit{df}}{=} & \lg\left(m_{4,1}m_{4,2}\mu(c_{2})m_{4,3}\mu(c_{3})m_{4,5}\right)\\
b_{5} & \overset{\textit{df}}{=} & \lg\left(m_{5,1}m_{5,2}\mu(c_{2})m_{5,3}\mu(c_{3})m_{5,4}\right)
\end{array}\label{eq:ex10}
\end{equation}

Then, according to the procedure proposed in (Sec. \ref{sub:Heuristics-of-the})
the linear equation system (\ref{eq:linear_equation_system_2a}) where
the unknown values $\hat{\lyxmathsym{\textmu}}(c_{i})\overset{\textit{df}}{=}\lg\left(\lyxmathsym{\textmu}(c_{i})\right)$
for $i=1,4,5$ takes the form: 
\begin{equation}
\left[\begin{array}{ccc}
n-1 & -1 & -1\\
-1 & n-1 & -1\\
-1 & -1 & n-1
\end{array}\right]\left[\begin{array}{c}
\widehat{\mu}(c_{1})\\
\widehat{\mu}(c_{4})\\
\widehat{\mu}(c_{5})
\end{array}\right]=\left[\begin{array}{c}
b_{1}\\
b_{4}\\
b_{5}
\end{array}\right]
\end{equation}

hence, numerically:
\begin{equation}
\left[\begin{array}{ccc}
4 & -1 & -1\\
-1 & 4 & -1\\
-1 & -1 & 4
\end{array}\right]\left[\begin{array}{c}
\widehat{\mu}(c_{1})\\
\widehat{\mu}(c_{4})\\
\widehat{\mu}(c_{5})
\end{array}\right]=\left[\begin{array}{c}
0.62\\
0.949\\
0.537
\end{array}\right]
\end{equation}

Solving the linear equation system provides us with $\widehat{\mu}(c_{1})=0.335$,
$\widehat{\mu}(c_{4})=0.4$ and $\widehat{\mu}(c_{5})=0.318$ which
leads to the desired result $10^{\widehat{\mu}(c_{1})}=2.16$, $10^{\widehat{\mu}(c_{4})}=2.514$
and $10^{\widehat{\mu}(c_{5})}=2.08$. The non-scaled weight vector
$\mu$ supplemented by the known values $\mu(c_{2})=5$ and $\mu(c_{3})=7$
takes the form:

\begin{equation}
\mu=\left[2.16,5,7,2.514,2.08\right]^{T}
\end{equation}

and after rescaling:

\begin{equation}
\mu_{n}=\left[0.115,0.267,0.373,0.134,0.111\right]^{T}
\end{equation}

Note that $|C_{U}|=3$ implies that the dimensions of matrix $\hat{A}$
are $3\times3$, moreover $\det(\hat{A})\neq0$ and $\lyxmathsym{\textmu}(c_{i})>0$
for $i=1,4,5$ (see sec. \ref{sub:Existence-of-solution}) .

\subsection{Example II (Choosing the best TV show)}

Certain TV broadcaster wants to produce a new entertainment TV show
in one of the European countries. It considering a purchase the license
for one of the five entertainment shows produced in the United States.
 So far in Europe three similar programs were broadcasted. Through
the market research there are known approximate size of their European
audience. They are respectively $5,500,000,\,4,500,000$ and $4,950,000$
persons for programs $c_{6},c_{7}$ and $c_{8}$ correspondingly.
The production costs of these programs are similar. In order to select
possibly the most profitable TV show the station hires a few seasoned
media experts. During the expert panel they prepared the following
\emph{PC} matrix $M$ representing a relative attractiveness of all
the considered programs. 

{\renewcommand{\arraycolsep}{1.3pt}

\begin{equation}
M=\left[\begin{array}{cccccccc}
1 & 0.8 & 1.333 & 0.7 & 0.5 & 0.6 & 0.75 & 0.667\\
1.25 & 1 & 1.667 & 0.875 & 0.625 & 0.75 & 0.9 & 0.833\\
1.333 & 0.6 & 1 & 0.933 & 0.667 & 0.8 & 0.978 & 0.889\\
1.429 & 1.143 & 1.071 & 1 & 0.714 & 0.857 & 1.05 & 0.952\\
2 & 1.6 & 1.5 & 1.4 & 1 & 1.2 & 1.467 & 1.333\\
1.667 & 1.333 & 1.25 & 1.167 & 0.833 & 1 & 1.222 & 1.111\\
1.333 & 1.111 & 1.023 & 0.952 & 0.682 & 0.818 & 1 & 0.909\\
1.5 & 1.2 & 0.382 & 1.05 & 0.75 & 0.9 & 1.1 & 1
\end{array}\right]
\end{equation}

}

In the matrix $M$ every entry $m_{ij}$ corresponds to the ratio
describing attractiveness of the \emph{TV} show $c_{i}$ with respect
to the attractiveness of \emph{TV} show $c_{j}$. Since the values
of attractiveness for $c_{6},c_{7}$ and $c_{8}$ are known (they
are approximated by the number of people watching the given TV show),
thus the appropriate ratios $m_{ij}$ for $i,j=6,7,8$ are not the
subject of the expert judgment. Instead, they are calculated based
on data from the market research. For example: 
\begin{equation}
m_{6,7}=\frac{\mu(c_{6})}{\mu(c_{7})}=\frac{5,100,000}{4,500,000}=1.222\label{eq:sample_m_67}
\end{equation}

or 

\begin{equation}
m_{6,8}=\frac{\mu(c_{6})}{\mu(c_{8})}=\frac{5,100,000}{4,950,000}=1.111\label{eq:sample_m_68}
\end{equation}
The other entries of $M$ represent the subjective judgements of experts. 

Similarly as before, to find a solution with the help of HRE supported
by the geometric averaging heuristics, the system of equations (\ref{eq:non_linear_eq_system})
must be solved. The desired values $\lyxmathsym{\textmu}(c_{i})$
for $i=1\ldots,5$ will be derived from the formula $\widehat{\mu}(c_{i})=\log\lyxmathsym{\textmu}(c_{i})$.
Because $|C_{U}|=5$, the dimensions of matrix $\widehat{A}$ are
$5\times5$. The linear equation system need to be solved is as follows:

{\renewcommand{\arraycolsep}{1.3pt}

\begin{equation}
\left[\begin{array}{ccccc}
n-1 & -1 & -1 & -1 & -1\\
-1 & n-1 & -1 & -1 & -1\\
-1 & -1 & n-1 & -1 & -1\\
-1 & -1 & -1 & n-1 & -1\\
-1 & -1 & -1 & -1 & n-1
\end{array}\right]\left[\begin{array}{c}
\begin{array}{c}
\widehat{\mu}(c_{1})\\
\widehat{\mu}(c_{2})\\
\widehat{\mu}(c_{3})
\end{array}\\
\begin{array}{c}
\widehat{\mu}(c_{4})\\
\widehat{\mu}(c_{5})
\end{array}
\end{array}\right]=\left[\begin{array}{c}
b_{1}\\
b_{2}\\
b_{3}\\
b_{4}\\
b_{5}
\end{array}\right]\label{eq:intermediate_eq_ex2}
\end{equation}

where 
\begin{equation}
\begin{array}{ccc}
b_{1} & \overset{\textit{df}}{=} & \lg\left(m_{1,2}m_{1,3}m_{1,4}m_{1,5}m_{1,6}\mu(c_{6})m_{1,7}\mu(c_{7})m_{1,8}\mu(c_{8})\right)\\
b_{2} & \overset{\textit{df}}{=} & \lg\left(m_{2,1}m_{2,3}m_{2,4}m_{2,5}m_{2,6}\mu(c_{6})m_{2,7}\mu(c_{7})m_{2,8}\mu(c_{8})\right)\\
b_{3} & \overset{\textit{df}}{=} & \lg\left(m_{3,1}m_{3,2}m_{3,4}m_{3,5}m_{3,6}\mu(c_{6})m_{3,7}\mu(c_{7})m_{3,8}\mu(c_{8})\right)\\
b_{4} & \overset{\textit{df}}{=} & \lg\left(m_{4,1}m_{4,2}m_{4,3}m_{4,5}m_{4,6}\mu(c_{6})m_{4,7}\mu(c_{7})m_{4,8}\mu(c_{8})\right)\\
b_{5} & \overset{\textit{df}}{=} & \lg\left(m_{5,1}m_{5,2}m_{5,3}m_{5,4}m_{5,6}\mu(c_{6})m_{5,7}\mu(c_{7})m_{5,8}\mu(c_{8})\right)
\end{array}\label{eq:b_values_ex2}
\end{equation}

hence, (\ref{eq:intermediate_eq_ex2}) numerically:
\begin{equation}
\left[\begin{array}{ccccc}
7 & -1 & -1 & -1 & -1\\
-1 & 7 & -1 & -1 & -1\\
-1 & -1 & 7 & -1 & -1\\
-1 & -1 & -1 & 7 & -1\\
-1 & -1 & -1 & -1 & 7
\end{array}\right]\left[\begin{array}{c}
\begin{array}{c}
\widehat{\mu}(c_{1})\\
\widehat{\mu}(c_{2})\\
\widehat{\mu}(c_{3})
\end{array}\\
\begin{array}{c}
\widehat{\mu}(c_{4})\\
\widehat{\mu}(c_{5})
\end{array}
\end{array}\right]=\left[\begin{array}{c}
\begin{array}{c}
19.137\\
19.895
\end{array}\\
19.627\\
\begin{array}{c}
20.118\\
21.286
\end{array}
\end{array}\right]\label{eq:intermediate_eq_num_ex2}
\end{equation}

}

The intermediate result vector is:

\begin{equation}
\widehat{\mu}=\left[6.561,6.656,6.623,6.684,6.83\right]^{T}\label{eq:intermediate_result_ex2}
\end{equation}

Hence, following the rule $\mu(c_{i})=\xi^{\widehat{\mu}(c_{i})}$,
where $\xi=10$ is the logarithm base, the final result vector is
calculated. 

\begin{equation}
\mu=\left[\begin{array}{c}
3,643,307\\
4,530,955\\
4,196,128\\
4,831,326\\
6,761,938
\end{array}\right]\label{eq:ex_last}
\end{equation}

Thus, according to the expert judgments and the market research the
TV show number $5$ (denoted as $c_{5}$) has a chance to gather in
front of TVs near $6.8$ million people, whilst the second one in
line ``only'' $4.8$ million of people. Based on this estimate the
board of directors representing the broadcaster has decide to recommend
the purchase of the license for the fifth presented TV show.

\section{Summary \label{sec:Summary}}

The presented geometric HRE approach is another solution to the problem
of rankings with the reference set. It proposes to use a geometric
mean instead of arithmetic one used in \citep{Kulakowski2013ahre,Kulakowski2013hrea}.
The advantage of this approach is the robustness of the procedure.
As has been shown in (Sec. \ref{sub:Existence-of-solution}) the proposed
solution works for arbitrary set of input data producing admissible
vector of weights. The resulted ranking very often turns out to be
optimal in sense of the magnitude of multiplicative errors. According
to the formulated and proven condition (Sec. \ref{sub:Optimality-condition}),
this happens when the differences between the resulted priorities
are not too large. 

The \emph{HRE} approach may be useful in many different situations
including, ranking creation, valuation of goods and services, risk
assessment and others. Due to the lack of restrictions on the input
\emph{PC} matrix (method with the geometric mean always produces an
admissible result), the scope of the applicability of the HRE method
increases. Thus, the presented method covers cases which can not always
be dealt with using the arithmetic mean heuristics.

Despite the encouraging results, much remains to be done. In particular,
the role of the inconsistency in the input matrix $M$ should be more
deeply investigated.  Of course, the more studied examples, the better.
Thus, further development of the method will be particularly focused
on the study and analysis of use cases.

\bibliographystyle{plain}
\bibliography{papers_biblio_reviewed}

\begin{thebibliography}{10}

\bibitem{Bozoki2010ooco}
S.~Boz{\'o}ki, J.~F{\"u}l{\"o}p, and L.~R{\'o}nyai.
\newblock On optimal completion of incomplete pairwise comparison matrices.
\newblock {\em Mathematical and Computer Modelling}, 52(1--2):318 -- 333, 2010.

\bibitem{Bozoki2008osak}
S.~Boz{\'o}ki and T.~Rapcsak.
\newblock {On Saaty's and Koczkodaj's inconsistencies of pairwise comparison
  matrices}.
\newblock {\em Journal of Global Optimization}, 42(2):157--175, 2008.

\bibitem{Colomer2011rlfa}
J.~M. Colomer.
\newblock {Ramon Llull: from `Ars electionis' to social choice theory}.
\newblock {\em Social Choice and Welfare}, 40(2):317--328, October 2011.

\bibitem{Faliszewski2009lacv}
P.~Faliszewski, E.~Hemaspaandra, L.~A. Hemaspaandra, and J.~Rothe.
\newblock Llull and copeland voting computationally resist bribery and
  constructive control.
\newblock {\em J. Artif. Intell. Res. (JAIR)}, 35:275--341, 2009.

\bibitem{Ishizaka2011rotm}
A.~Ishizaka and A.~Labib.
\newblock {Review of the main developments in the analytic hierarchy process}.
\newblock {\em Expert Systems with Applications}, 38(11):14336--14345, October
  2011.

\bibitem{Janicki2012oapc}
R.~Janicki and Y.~Zhai.
\newblock On a pairwise comparison-based consistent non-numerical ranking.
\newblock {\em Logic Journal of the IGPL}, 20(4):667--676, 2012.

\bibitem{Kacprzyk2008ogdmSIMPLE}
J.~Kacprzyk, S.~Zadrozny, M.~Fedrizzi, and H.~Nurmi.
\newblock On group decision making, consensus reaching, voting and voting
  paradoxes under fuzzy preferences and a fuzzy majority: A survey and some
  perspectives.
\newblock In {\em Studies in Fuzziness and Soft Computing}, volume 220 of {\em
  Studies in Fuzziness and Soft Computing}, pages 263--295. Springer, 2008.

\bibitem{Koczkodaj1999mnei}
W.~W. Koczkodaj, M.~W. Herman, and M.~Orlowski.
\newblock {Managing Null Entries in Pairwise Comparisons}.
\newblock {\em Knowledge and Information Systems}, 1(1):119--125, 1999.

\bibitem{Koczkodaj2014otqe}
W.~W. Koczkodaj, K.~Ku{\l}akowski, and A.~Lig{\k e}za.
\newblock On the quality evaluation of scientific entities in poland supported
  by consistency-driven pairwise comparisons method.
\newblock {\em Scientometrics}, 2014.

\bibitem{Koczkodaj2010odbi}
W.~W. Koczkodaj and S.~J. Szarek.
\newblock {On distance-based inconsistency reduction algorithms for pairwise
  comparisons}.
\newblock {\em Logic Journal of the IGPL}, 18(6):859--869, October 2010.

\bibitem{Kulakowski2013ahre}
K.~Ku{\l}akowski.
\newblock A heuristic rating estimation algorithm for the pairwise comparisons
  method.
\newblock {\em Central European Journal of Operations Research}, pages 1--17,
  2013.

\bibitem{Kulakowski2013hrea}
K.~Ku{\l}akowski.
\newblock {Heuristic Rating Estimation Approach to The Pairwise Comparisons
  Method}.
\newblock {\em Fundamenta Informaticae (to be appeared)}, 2014.

\bibitem{Kulakowski2014note}
K.~Ku{\l}akowski.
\newblock Notes on the existence of solutions in the pairwise comparisons
  method using the heuristic rating estimation approach.
\newblock {\em CoRR}, abs/1402.4064, 2014.

\bibitem{Lotfi2011reui}
F.~H. Lotfi, R.~Fallahnejad, and N.~Navidi.
\newblock {Ranking efficient units in DEA by using TOPSIS method}.
\newblock {\em Applied Mathematical Sciences}, 2011.

\bibitem{Mikhailov2003dpff}
L.~Mikhailov.
\newblock {Deriving priorities from fuzzy pairwise comparison judgements}.
\newblock {\em Fuzzy Sets and Systems}, 134(3):365--385, March 2003.

\bibitem{Peterson1998evabt}
G.~L. Peterson and T.~C. Brown.
\newblock {Economic valuation by the method of paired comparison, with emphasis
  on evaluation of the transitivity axiom}.
\newblock {\em Land Economics}, pages 240--261, 1998.

\bibitem{Plemmons1976mcnm}
R.~J. Plemmons.
\newblock {M-matrix characterizations. I - nonsingular M-matrices}.
\newblock {\em Linear Algebra and its Applications}, 18(2):175--188, December
  1976.

\bibitem{Quarteroni2000nm}
A.~Quarteroni, R.~Sacco, and F.~Saleri.
\newblock {\em Numerical mathematics}.
\newblock Springer Verlag, 2000.

\bibitem{Saaty1977asmfSIMPL}
T.~L. Saaty.
\newblock A scaling method for priorities in hierarchical structures.
\newblock {\em Journal of Mathematical Psychology}, 15(3):234 -- 281, 1977.

\bibitem{Thurstone27aloc}
L.~L. Thurstone.
\newblock A law of comparative judgment, reprint of an original work published
  in 1927.
\newblock {\em Psychological Review}, 101:266--270, 1994.

\end{thebibliography}

\end{document}